\newcolumntype{L}[1]{>{\raggedright\let\newline\\\arraybackslash\hspace{0pt}}m{#1}}
\newcolumntype{C}[1]{>{\centering\let\newline\\\arraybackslash\hspace{0pt}}m{#1}}
\newcolumntype{R}[1]{>{\raggedleft\let\newline\\\arraybackslash\hspace{0pt}}m{#1}}
\let\MYcaption\@makecaption
\let\@makecaption\MYcaption
\let\oldgls\gls
\let\oldglspl\glspl
\newcommand\fussy@ifnextchar[3]{%
	\let\reserved@d=#1%
	\def\reserved@a{#2}%
	\def\reserved@b{#3}%
	\futurelet\@let@token\fussy@ifnch}
\def\fussy@ifnch{%
	\ifx\@let@token\reserved@d
		\let\reserved@c\reserved@a
	\else
		\let\reserved@c\reserved@b
	\fi
	\reserved@c}
\renewcommand{\gls}[1]{%
\oldgls{#1}\fussy@ifnextchar.{\@checkperiod}{\@}}
\renewcommand{\glspl}[1]{%
\oldglspl{#1}\fussy@ifnextchar.{\@checkperiod}{\@}}
\newcommand{\@checkperiod}[1]{%
	\ifnum\sfcode`\.=\spacefactor\else#1\fi
}
\newacronym{wrt}{w.r.t.}{with respect to}
\newacronym{RHS}{R.H.S.}{right-hand side}
\newacronym{LHS}{L.H.S.}{left-hand side}
\newacronym{iid}{i.i.d.}{independent and identically distributed}
\newacronym{SOTA}{SOTA}{state-of-the-art}
\let\saved@bibitem\@bibitem\makeatother
\let\@bibitem\saved@bibitem\makeatother
\crefname{equation}{}{}
\Crefname{equation}{}{}
\crefname{claim}{claim}{claims}
\crefname{step}{step}{steps}
\crefname{line}{line}{lines}
\crefname{condition}{condition}{conditions}
\crefname{dmath}{}{}
\crefname{dseries}{}{}
\crefname{dgroup}{}{}
\crefname{Problem}{Problem}{Problems}
\crefname{Theorem}{Theorem}{Theorems}
\crefname{Corollary}{Corollary}{Corollaries}
\crefname{Proposition}{Proposition}{Propositions}
\crefname{Lemma}{Lemma}{Lemmas}
\crefname{Definition}{Definition}{Definitions}
\crefname{Example}{Example}{Examples}
\crefname{Assumption}{Assumption}{Assumptions}
\crefname{Remark}{Remark}{Remarks}
\crefname{Rem}{Remark}{Remarks}
\crefname{remarks}{Remarks}{Remarks}
\crefname{Appendix}{Appendix}{Appendices}
\crefname{Supplement}{Supplement}{Supplements}
\crefname{Exercise}{Exercise}{Exercises}
\crefname{Theorem_A}{Theorem}{Theorems}
\crefname{Corollary_A}{Corollary}{Corollaries}
\crefname{Proposition_A}{Proposition}{Propositions}
\crefname{Lemma_A}{Lemma}{Lemmas}
\crefname{Definition_A}{Definition}{Definitions}
		\let\Cref\crtCref
		\let\cref\crtcref
\def\cleartheorem#1{%
    \expandafter\let\csname#1\endcsname\relax
    \expandafter\let\csname c@#1\endcsname\relax
}
\def\clearthms#1{ \@for\tname:=#1\do{\cleartheorem\tname} }
		\newtheorem{Theorem}{Theorem}
		\newtheorem{Corollary}{Corollary}
		\newtheorem{Proposition}{Proposition}
		\newtheorem{Theorem}{Theorem}
	\newtheorem{Definition}{Definition}
	\newtheorem{Example}{Example}
	\newtheorem{Assumption}{Assumption}
\theoremstyle{remark}
\theoremstyle{plain}
\newcommand{\qednew}{\nobreak \ifvmode \relax \else
		\ifdim\lastskip<1.5em \hskip-\lastskip
			\hskip1.5em plus0em minus0.5em \fi \nobreak
		\vrule height0.75em width0.5em depth0.25em\fi}
\NewDocumentCommand{\movedownsub}{e{^_}}{%
	\IfNoValueTF{#1}{%
		\IfNoValueF{#2}{^{}}
	}{%
		^{#1}
	}%
	\IfNoValueF{#2}{_{#2}}
}
\let\latexchi\chi
\RenewDocumentCommand{\chi}{}{\latexchi\movedownsub}
\newcommand{\Real}{\mathbb{R}}
\newcommand{\calC}{\mathcal{C}}
\newcommand{\calS}{\mathcal{S}}
\newcommand{\calW}{\mathcal{W}}
\newcommand{\calX}{\mathcal{X}}
\newcommand{\calY}{\mathcal{Y}}
\newcommand{\bT}{\mathbf{T}}
\newcommand{\bbN}{\mathbb{N}}
\newcommand{\bbZ}{\mathbb{Z}}
\newcommand{\scS}{\mathscr{S}}
\DeclareSymbolFont{bsfletters}{OT1}{cmss}{bx}{n}
\DeclareSymbolFont{ssfletters}{OT1}{cmss}{m}{n}
\DeclareMathSymbol{\bsfGamma}{0}{bsfletters}{'000}
\DeclareMathSymbol{\ssfGamma}{0}{ssfletters}{'000}
\DeclareMathSymbol{\bsfDelta}{0}{bsfletters}{'001}
\DeclareMathSymbol{\ssfDelta}{0}{ssfletters}{'001}
\DeclareMathSymbol{\bsfTheta}{0}{bsfletters}{'002}
\DeclareMathSymbol{\ssfTheta}{0}{ssfletters}{'002}
\DeclareMathSymbol{\bsfLambda}{0}{bsfletters}{'003}
\DeclareMathSymbol{\ssfLambda}{0}{ssfletters}{'003}
\DeclareMathSymbol{\bsfXi}{0}{bsfletters}{'004}
\DeclareMathSymbol{\ssfXi}{0}{ssfletters}{'004}
\DeclareMathSymbol{\bsfPi}{0}{bsfletters}{'005}
\DeclareMathSymbol{\ssfPi}{0}{ssfletters}{'005}
\DeclareMathSymbol{\bsfSigma}{0}{bsfletters}{'006}
\DeclareMathSymbol{\ssfSigma}{0}{ssfletters}{'006}
\DeclareMathSymbol{\bsfUpsilon}{0}{bsfletters}{'007}
\DeclareMathSymbol{\ssfUpsilon}{0}{ssfletters}{'007}
\DeclareMathSymbol{\bsfPhi}{0}{bsfletters}{'010}
\DeclareMathSymbol{\ssfPhi}{0}{ssfletters}{'010}
\DeclareMathSymbol{\bsfPsi}{0}{bsfletters}{'011}
\DeclareMathSymbol{\ssfPsi}{0}{ssfletters}{'011}
\DeclareMathSymbol{\bsfOmega}{0}{bsfletters}{'012}
\DeclareMathSymbol{\ssfOmega}{0}{ssfletters}{'012}
\newcommand*\rel@kern[1]{\kern#1\dimexpr\macc@kerna}
\newcommand*\widebar[1]{%
  \begingroup
  \def\mathaccent##1##2{%
    \rel@kern{0.8}%
    \overline{\rel@kern{-0.8}\macc@nucleus\rel@kern{0.2}}%
    \rel@kern{-0.2}%
  }%
  \macc@depth\@ne
  \let\math@bgroup\@empty \let\math@egroup\macc@set@skewchar
  \mathsurround\z@ \frozen@everymath{\mathgroup\macc@group\relax}%
  \macc@set@skewchar\relax
  \let\mathaccentV\macc@nested@a
  \macc@nested@a\relax111{#1}%
  \endgroup
}
\DeclareMathOperator{\var}{var}
\DeclareMathOperator{\cov}{cov}
\newcommand{\ifbcdot}[1]{\ifblank{#1}{\cdot}{#1}}
\DeclarePairedDelimiterX\abs[1]{\lvert}{\rvert}{\ifbcdot{#1}}
\DeclarePairedDelimiterX\parens[1]{(}{)}{\ifbcdot{#1}}
\DeclarePairedDelimiterX\brk[1]{[}{]}{\ifbcdot{#1}}
\DeclarePairedDelimiterX\braces[1]{\{}{\}}{\ifbcdot{#1}}
\DeclarePairedDelimiterX\angles[1]{\langle}{\rangle}{\ifblank{#1}{\cdot,\cdot}{#1}}
\DeclarePairedDelimiterX\ip[2]{\langle}{\rangle}{\ifbcdot{#1},\ifbcdot{#2}}
\DeclarePairedDelimiterX\norm[1]{\lVert}{\rVert}{\ifbcdot{#1}}
\DeclarePairedDelimiterX\ceil[1]{\lceil}{\rceil}{\ifbcdot{#1}}
\DeclarePairedDelimiterX\floor[1]{\lfloor}{\rfloor}{\ifbcdot{#1}}
\DeclareFontFamily{U}{matha}{\hyphenchar\font45}
\DeclareFontShape{U}{matha}{m}{n}{
      <5> <6> <7> <8> <9> <10> gen * matha
      <10.95> matha10 <12> <14.4> <17.28> <20.74> <24.88> matha12
      }{}
\DeclareSymbolFont{matha}{U}{matha}{m}{n}
\DeclareFontFamily{U}{mathx}{\hyphenchar\font45}
\DeclareFontShape{U}{mathx}{m}{n}{
      <5> <6> <7> <8> <9> <10>
      <10.95> <12> <14.4> <17.28> <20.74> <24.88>
      mathx10
      }{}
\DeclareSymbolFont{mathx}{U}{mathx}{m}{n}
\DeclareMathDelimiter{\vvvert}{0}{matha}{"7E}{mathx}{"17}
\DeclarePairedDelimiterX\vertiii[1]{\vvvert}{\vvvert}{\ifbcdot{#1}}
\DeclarePairedDelimiterXPP\trace[1]{\operatorname{Tr}}{(}{)}{}{\ifbcdot{#1}} 
\DeclarePairedDelimiterXPP\col[1]{\operatorname{col}}{\{}{\}}{}{\ifbcdot{#1}} 
\DeclarePairedDelimiterXPP\row[1]{\operatorname{row}}{\{}{\}}{}{\ifbcdot{#1}} 
\DeclarePairedDelimiterXPP\erf[1]{\operatorname{erf}}{(}{)}{}{\ifbcdot{#1}}
\DeclarePairedDelimiterXPP\erfc[1]{\operatorname{erfc}}{(}{)}{}{\ifbcdot{#1}}
\DeclarePairedDelimiterXPP\KLD[2]{D}{(}{)}{}{\ifbcdot{#1}\, \delimsize\|\, \ifbcdot{#2}} 
\DeclarePairedDelimiterXPP\op[2]{\operatorname{#1}}{(}{)}{}{#2} 
\newcommand{\ud}{\,\mathrm{d}} 
\DeclarePairedDelimiterXPP\indicate[1]{{\bf 1}}{\{}{\}}{}{\ifbcdot{#1}}
\NewDocumentCommand\ofrac{s m}{%
	\IfBooleanTF#1%
	{\dfrac{1}{#2}}%
	{\frac{1}{#2}}%
}
\NewDocumentCommand\ddfrac{s m m}{%
	\IfBooleanTF#1%
	{\dfrac{\mathrm{d} {#2}}{\mathrm{d} {#3}}}%
	{\frac{\mathrm{d} {#2}}{\mathrm{d} {#3}}}%
}
\NewDocumentCommand\ppfrac{s m m}{%
	\IfBooleanTF#1%
	{\dfrac{\partial {#2}}{\partial {#3}}}%
	{\frac{\partial {#2}}{\partial {#3}}}%
}
\providecommand\given{}
\DeclarePairedDelimiterX\Set[2]\{\}{%
\renewcommand\given{\SetSymbol[\delimsize]{#1}}
#2
}
\DeclarePairedDelimiterX\Setc[1]\{\}{%
\renewcommand\given{\SetSymbol{:}}
#1
}
\NewDocumentCommand\set{s o m}{%
	\IfBooleanTF#1%
	{\IfValueTF{#2}{\Set*{#2}{#3}}{\Setc*{#3}}}%
	{\IfValueTF{#2}{\Set{#2}{#3}}{\Setc{#3}}}%
}
\NewDocumentCommand{\evalat}{ s O{\big} m e{_^} }{%
\IfBooleanTF{#1}%
{\left. #3 \right|}{#3#2|}%
\IfValueT{#4}{_{#4}}%
\IfValueT{#5}{^{#5}}%
}
\providecommand\given{}
\DeclarePairedDelimiterXPP\cprob[1]{}(){}{
\renewcommand\given{\nonscript\,\delimsize\vert\allowbreak\nonscript\,\mathopen{}}%
\DeclarePairedDelimiterXPP\cexp[1]{}[]{}{
\renewcommand\given{\nonscript\,\delimsize\vert\allowbreak\nonscript\,\mathopen{}}%
#1%
}
\DeclareDocumentCommand \P { s e{_^} d() g } {%
	\mathbb{P}%
	\IfBooleanTF{#1}%
		{
			\IfValueT{#2}{_{#2}}%
			\IfValueT{#3}{^{#3}}%
			\IfValueTF{#5}{\cprob{#4 \given #5}}{\IfValueT{#4}{\cprob{#4}}}%
		}%
		{
			\IfValueT{#2}{_{#2}}%
			\IfValueT{#3}{^{#3}}%
			\IfValueTF{#5}{\cprob*{#4 \given #5}}{\IfValueT{#4}{\cprob*{#4}}}%
		}%
}
\DeclareDocumentCommand \E { s e{_^} o g } {%
	\mathbb{E}%
	\IfBooleanTF{#1}%
		{
			\IfValueT{#2}{_{#2}}%
			\IfValueT{#3}{^{#3}}%
			\IfValueTF{#5}{\cexp{#4 \given #5}}{\IfValueT{#4}{\cexp{#4}}}%
		}%
		{
			\IfValueT{#2}{_{#2}}%
			\IfValueT{#3}{^{#3}}%
			\IfValueTF{#5}{\cexp*{#4 \given #5}}{\IfValueT{#4}{\cexp*{#4}}}%
		}%
}
\DeclareDocumentCommand \Var { s e{_^} d() g } {%
	\var%
	\IfBooleanTF{#1}%
		{
			\IfValueT{#2}{_{#2}}%
			\IfValueT{#3}{^{#3}}%
			\IfValueTF{#5}{\cprob{#4 \given #5}}{\IfValueT{#4}{\cprob{#4}}}%
		}%
		{
			\IfValueT{#2}{_{#2}}%
			\IfValueT{#3}{^{#3}}%
			\IfValueTF{#5}{\cprob*{#4 \given #5}}{\IfValueT{#4}{\cprob*{#4}}}%
		}%
}
\DeclareDocumentCommand \Cov { s e{_^} d() g } {%
	\cov%
	\IfBooleanTF{#1}%
		{
			\IfValueT{#2}{_{#2}}%
			\IfValueT{#3}{^{#3}}%
			\IfValueTF{#5}{\cprob{#4 \given #5}}{\IfValueT{#4}{\cprob{#4}}}%
		}%
		{
			\IfValueT{#2}{_{#2}}%
			\IfValueT{#3}{^{#3}}%
			\IfValueTF{#5}{\cprob*{#4 \given #5}}{\IfValueT{#4}{\cprob*{#4}}}%
		}%
}
\NewDocumentCommand \dist {m o o} {%
\mathrm{#1}\left(%
	\IfValueT{#3}{%
		\tl_if_blank:nTF{ #3 }{\cdot\, \middle|\, }{#3\, \middle|\, }%
	}
	\IfValueT{#2}{#2}%
\right)%
}
\NewDocumentCommand {\cbrace} {t+ D[]{black} D(){\widthof{#5}} m m } {%
	\begingroup%
		\color{#2}
		\IfBooleanTF{#1}{%
			\overbrace{#4}^%
		}{
			\underbrace{#4}_%
		}%
		{\parbox[c]{#3}{\centering\footnotesize{#5}}}%
	\endgroup%
}
\let\oldforall\forall
\renewcommand{\forall}{\oldforall \, }
\let\oldexist\exists
\renewcommand{\exists}{\oldexist \, }
\newcommand{\rankcolor}[2]{%
	\expandafter\renewcommand\csname #1\endcsname[1]{%
		\ifblank{##1}{%
			{\color{#2} \textbf{#2}}%
		}{%
			\ifmmode
				\textcolor{#2}{\bm{##1}}%
			\else%
				{\color{#2} \textbf{##1}}%
			\fi	
		}%
	}
}
\DeclareDocumentCommand{\includeCroppedPdf}{ o O{./Figures/} m }{
	\IfFileExists{#2#3-crop.pdf}{}{%
		\immediate\write18{pdfcrop #2#3.pdf #2#3-crop.pdf}}%
	\includegraphics[#1]{#2#3-crop.pdf}
}
\newcommand*{\addFileDependency}[1]{
  \typeout{(#1)}
  \@addtofilelist{#1}
  \IfFileExists{#1}{}{\typeout{No file #1.}}
}
\definecolor{gray90}{gray}{0.9}
\def\colorlist{red,blue,brown,cyan,darkgray,gray,lightgray,green,lime,magenta,olive,orange,pink,purple,teal,violet,white,yellow}
\def\startcomment{[}
	\newcommand{\createcolor}[1]{%
			\expandafter\newcommand\csname #1\endcsname[1]{{\color{#1} ##1}}%
	}
	\newcommand{\msout}[1]{\text{\color{green} \sout{\ensuremath{#1}}}}
	\newcommand{\del}[1]{{\color{green}\ifmmode \msout{#1}\else\sout{#1}\fi}}
	\newcommand{\createcolor}[1]{%
			\expandafter\newcommand\csname #1\endcsname[1]{%
				\noexpandarg%
				\StrChar{##1}{1}[\firstletter]%
				\if\firstletter\startcomment%
					\relax
				\else%
					##1
				\fi
			}%
	}
	\newcommand{\msout}[1]{}
	\newcommand{\del}[1]{}
\def\@tempa#1,{%
    \ifx\relax#1\relax\else
        \createcolor{#1}%
        \expandafter\@tempa
    \fi
}
\newcommand{\hhide}[1]{}
	\def\@testdef #1#2#3{%
		\def\reserved@a{#3}\expandafter \ifx \csname #1@#2\endcsname
			\reserved@a  \else
			\typeout{^^Jlabel #2 changed:^^J%
				\meaning\reserved@a^^J%
				\expandafter\meaning\csname #1@#2\endcsname^^J}%
			\@tempswatrue \fi}
\newcommand\restr[2]{{
		\left.\kern-\nulldelimiterspace 
		#1 
		\littletaller 
		\right|_{#2} 
}}
\newcommand{\littletaller}{\mathchoice{\vphantom{\big|}}{}{}{}}
\newacronym{GFT}{GFT}{graph Fourier transform}
\newacronym{GSP}{GSP}{graph signal processing}
\newacronym{GNN}{GNN}{graph neural network}
\newacronym{GSO}{GSO}{graph shift operator}
\newacronym{MSE}{MSE}{mean-squared error}
\begin{document}
	
	\title{Generalized Graphon Process: Convergence of Graph Frequencies in Stretched Cut Distance}
	\author{Xingchao Jian, Feng Ji, Wee~Peng~Tay,~\IEEEmembership{Senior~Member,~IEEE}%
	}
	
	
	
	\maketitle \thispagestyle{empty}
	
	
	\begin{abstract}
		Graphons have traditionally served as limit objects for dense graph sequences, with the cut distance serving as the metric for convergence. 
		However, sparse graph sequences converge to the trivial graphon under the conventional definition of cut distance, which make this framework inadequate for many practical applications. In this paper, we utilize the concepts of generalized graphons and stretched cut distance to describe the convergence of sparse graph sequences. Specifically, we consider a random graph process generated from a generalized graphon. This random graph process converges to the generalized graphon in stretched cut distance. We use this random graph process to model the growing sparse graph, and prove the convergence of the adjacency matrices' eigenvalues. We supplement our findings with experimental validation. Our results indicate the possibility of transfer learning between sparse graphs.
	\end{abstract}
	
	\begin{IEEEkeywords}
		Generalized graphons, sparse graph sequence, convergent graph frequencies.
	\end{IEEEkeywords}
	
	\section{Introduction}\label{sect:intro}
	
	Modern data analysis usually involves complex structures like graphs. In order to model and process signals on graphs, the \gls{GSP} has established a set of tools for a variety of tasks, including sampling, reconstruction and filtering \cite{OrtFroKov:J18,TanEldOrt:J20,JiaJiTay:J23}. Besides, by introducing non-linearity, \gls{GNN} provides a deep learning architecture and has been largely studied. These methods usually have good performances by exploiting the graph information when the underlying graph structure is known. In addition, they usually have good computational properties such as distributed implementation \cite{SegMarRib:J17} and robustness to perturbation \cite{CecBar:J20,SongKangWang:C22}. 
	
	In practice, designing signal processing techniques separately on different graphs can be computationally expensive. For example, in order to learn a graph filter, the eigendecomposition of \gls{GSO} can be computationally prohibited when the graph is large. Therefore, it is natural to consider learning graph filter or \gls{GNN} on a graph with a small or moderate size and then transfer it to other graphs which can be large. The success of such strategies relies on the inherent similarity between the graph used for training and testing. For example, the paper \cite{LevHuaBuc:J21} studied the transferability of \gls{GNN} by modeling the graphs as down-sampled version of a topological space, and the graph signals as samples of function on the space. The paper \cite{RonIsuKut:C19} studied the transferability of graph filters in Cayley smoothness space.
	
	In recent years, an arising way to explain the transferability of graph filter and \gls{GNN} is through the \emph{graphon} method \cite{Lov:12,RuiChaRib:J21}. The graphon method can be understood in two folds: 
	i) as the number of nodes tends to infinity, it is assumed that the graph sequence will converge to a graphon in cut distance, i.e., graphon is a \emph{limit object} of the graph sequence. This was proved to imply the graph frequency convergence \cite{RuiChaRib:J21}. 
	ii) the graphs of interest are generated from the same probabilistic model (graphon). This model was utilized to bound the difference between the outputs of a \gls{GNN} with fixed parameters on two different graphs sampled from the same graphon \cite{RuiChaRib:C20,RuiChaRib:C21}.
	
	Graphon is suitable for modeling the limit of dense graph sequences under the cut distance. However, if the graph sequence is sparse, then it is no longer appropriate. By saying a graph sequence $(G_n)_{n\geq1}$ is \emph{sparse}, we mean that $\lim\limits_{n\to\infty}\dfrac{|E(G_n)|}{|V(G_n)|^2}= 0$, where $V(G_n)$ and $E(G_n)$ are the vertex and edge sets of $G_n$. In this case, the cut norm of $(G_n)_{n\geq1}$ converges to $0$ since it equals $\dfrac{2|E(G_n)|}{|V(G_n)|^2}$, i.e., all sparse sequences $(G_n)_{n\geq1}$ converge to the zero graphon if we use the standard definitions of graphon and cut distance. Therefore, in order to discuss transferability of filters or \gls{GNN}s on sparse graphs, we need alternative concepts of graphon and cut distance for sparse graph sequences. These concepts are the main focus of this paper. Our main contributions are:
	\begin{enumerate}
		\item We introduce the notions of generalized graphon and stretched cut distance for sparse graph convergence in place of the standard graphon and cut distance, which are more suitable for dense graph convergence. In particular, we introduce the notion of a generalized graphon process associated with the generalized graphon. This process converges to the generalized graphon in stretched cut distance. We model a sparse graph sequence as a subsequence of this random graph process.
		\item We prove that under the generalized graphon process model, the graph frequencies of the process have a linear relationship with the square root of the number of edges as the graph size grows asymptotically. 
		\item We compare the fitness of our theories and the standard graphon's theories on real dataset to show better fitness of the generalized graphon process model and the correctness of our theoretical result. 
	\end{enumerate}
	
	The rest of this paper is organized as follows. In \cref{sect:genmodel} we introduce a graph generating process based on a generalized notion of graphon. In \cref{sect:conv_result} we prove the convergence of graph frequencies of this process. In \cref{sect:exp} we corroborate our result by numerical experiments. We conclude the paper in \cref{sect:conc}. 
	
	\emph{Notations.} For any set $A$, we use $I_A$ to denote the indicator function on it.  We write $\Real_+$ as the set of non-negative real numbers. We denote cut norm by $\norm{\cdot}_{\square}$. For two functions $f_1$ and $f_2$, we write their composition as $f_1\circ f_2$. For a function $f:\calX\to\calY$, we define
	\begin{align*}
		f\times f:\calX\times\calX&\to\calY\times\calY \\
		(x_1,x_2)&\mapsto (f(x_1),f(x_2)).
	\end{align*}
	For two sets $\calX_1$ and $\calX_2$, we define the projection $\pi_i$ ($i=1,2$) as 
	\begin{align*}
		\pi_i:\calX_1 \times \calX_2 &\to \calX_i \\
		(x_1,x_2) &\mapsto x_i.
	\end{align*}

	\section{Generalized Graphon Process and Stretched Cut Distance}\label{sect:genmodel}
	
	In this section, we introduce the concepts of generalized graphon and generalized graphon process as a generating model for random sparse graph sequences. We then introduce the stretched cut distance to characterize the convergence of sparse graph sequences. 
	
	\subsection{Generalized Graphon and Graphon Process}
	In this subsection, we introduce a generalized definition of graphon, and an associated graph generating model studied in \cite{BorChaCoh:J18}. We assume that the growing graphs are generated by the following two components: an underlying feature space $\scS=(S, \calS, \mu)$ which is a $\sigma$-finite measure space and a symmetric function $W:S\times S\mapsto [0,1]$ such that $W\in L^2(S\times S)$. As we will see in the ensuing content, $\scS$ contains the feature that will be utilized by $W$ to determine the edges between the vertices of an infinite graph. We refer to the tuple $\calW = (W,\scS)$ as \emph{generalized graphon}. In \cite{BorChaCoh:J19} the most general exchangeable random graph model contains two more components representing isolated and star structures. Here for ease of analysis we omit them and consider the version in \cite{BorChaCoh:J18}, i.e., $\calW = (W,\scS)$. Note that when $\scS=[0,1]$, $\calW$ is the graphon commonly used in the exsiting graphon signal processing literature \cite{RuiChaRib:J21,RuiCha:C20}. We refer to these specific graphons as \emph{standard graphons}. In the rest of this paper, we always make the following assumption unless otherwise stated.
	
	\begin{Assumption}\label{asp:meas_space}
		The measure spaces $\scS$ under consideration are $\sigma$-finite, Borel, and atom-less.
	\end{Assumption}

	In order to model the growing process of graphs, we need to introduce the time dimension encoded by $\Real_+$. To be specific, we assign a Poisson point process $\Gamma$ on $\Real_+\times S$. We denote each point of this point process as $v=(t,x)$, where $t\in\Real_+$ denotes time and $x\in S$ denotes feature. The process $\Gamma$ induces an infinite graph $\tilde{G}$ with vertex set $V$ the set of all points generated by $\Gamma$. The edges of $\tilde{G}$ are randomly generated such that two different vertices $u = (t,x)$ and $u' = (t',x')$ are connected with probability $W(x,x')$. We do not assign any self-loop on $\tilde{G}$. From the nature of the generating process the graph $\tilde{G}$ can be regarded as containing all vertices that will arise. Then the growing graph at time instance $t$ will be $\tilde{G}_t$ which is the subgraph of $\tilde{G}$ with the vertex set $\tilde{V}_t = \set{(t',x)\in V \given t'\leq t}$. We denote $G_t$ as the graph obtained by removing all isolated vertices from $\tilde{G}_t$. In this paper, we will mainly focus on the sequence $(G_t)_{t\geq0}$, and refer to it as \emph{generalized graphon process}. We write $G_t=(V_t, E_t)$ where $|V_t| := N_t$. According to \cite{BorChaCoh:J18}, $(G_t)_{t\geq0}$ converge to $\calW$ in stretched cut distance, hence is suitable for modeling sparse growing networks (see \cref{subsect:stretch} for details).
	
	Under \cref{asp:meas_space}, we know that $\scS$ is isomorphic to $[0, \mu(S))$ by \cite[Lemma 33]{BorChaCoh:J18}. Therefore, if $\mu(S)<\infty$, then $\scS$ can be identified with a bounded interval. In this paper, we allow $\mu(S)=\infty$, in which case $\scS$ can be identified with $\Real_+$.
	
	In graphon theories, the graphs can be associated with a canonical graphon \cite[Section 7.1]{Lov:12}:
	
	\begin{Definition}
		Given a finite simple graph $G$ with vertex set $\set{v_i\given i=1,\dots,n}$ and edge set $E$. The canonical graphon associated with $G$ is defined as a step function
		\begin{align}
			W^{G} = \sum_{(v_i,v_j)\in E} I_{A_{ij}},
		\end{align}
		where $A_{ij}$ is the square $\left[\dfrac{i-1}{n},\dfrac{i}{n}\right) \times  \left[\dfrac{j-1}{n},\dfrac{j}{n}\right)$. We write $\calW^G:=(W^{G},\Real_+)$.
	\end{Definition}
	Note that $W^{G}$ can vary with the labeling of the vertex set, but this variation will not affect the evaluation of cut distance. Here we remind that the cut norm $\norm{\cdot}_\square$ and cut distance $\delta_{\square}$ are defined as \cite[Definition 5]{BorChaCoh:J18}
	\begin{align*}
		\norm{W}_{\square,S,\mu} &= \sup_{U,V\in\calS} \abs*{\int_{U\times V} W(x,y)\ud\mu(x)\ud\mu(y)},\\
		\delta_{\square}(\calW_1,\calW_2) &= \inf_{\tilde{\mu}\in\calC(\mu_1,\mu_2)} \norm*{W_1\circ(\pi_1\times\pi_1) - W_2\circ(\pi_2\times\pi_2)}_{\square,S_1\times S_2,\tilde{\mu}},
	\end{align*}
	where $\calC(\mu_1,\mu_2)$ is the set of all couplings of $\mu_1$ and $\mu_2$.
	For a measure $\mu$ over $(S_1\times S_2, \calS_1\times\calS_2)$, if $\mu$ has $\mu_1$ and $\mu_2$ as its marginal, then $\mu$ is called a \emph{coupling} of $\mu_1$ and $\mu_2$. Mathematically, this means $\mu(A\times S_2) = \mu_1(A)$ for all $A\in\calS_1$ and $\mu(S_1\times B) = \mu_2(B)$ for all $B\in\calS_2$. We omit the notions of $S$ and $\mu$ in the subscript of cut norm if they are clear in the context.
	
	\subsection{Stretched Cut Distance}\label{subsect:stretch}
	
	In this subsection, we introduce a rescaled version of cut distance to describe the convergence of sparse graph sequence. It has been explained in \cref{sect:intro} that the notion of standard cut distance is not able to describe the convergence in the sparse setting. In order to find reasonable limit of sparse graph sequence, the \emph{stretched cut distance} is defined as the cut distance between rescaled versions of graphons: 
	
	\begin{Definition}\cite[Definition 11]{BorChaCoh:J18}\label{def:stretch}
		For a graphon $\calW = (W,\scS)$, define $\calW^s = (W,\hat{\scS})$ where $\hat{\scS} = (S,\calS,\hat{\mu}) = (S,\calS,\norm{W}_1^{-\ofrac{2}}\mu)$.  We refer to $\calW^s$ as \emph{stretched graphon}. The stretched cut distance between two graphons $\calW_1$ and $\calW_2$ is defined as $\delta_{\square}^s(\calW_1,\calW_2):= \delta_{\square}(\calW_1^s,\calW_2^s)$. A sequence of graphs $(G_n)$ is called convergent to a graphon $\calW$ if their canonical graphons $(W^{G_n})$ converges to $\calW$ in stretched cut distance, i.e., $\lim\limits_{n\to\infty} \delta_{\square}^s(\calW^{G_n,s}, \calW)=0$.
	\end{Definition}
	Note that, according to the construction in \cref{def:stretch}, $\norm{W}_{\square,S,\hat{\mu}} = \norm{W}_{1,S,\hat{\mu}}\equiv1$. Therefore, it is more suitable to describe the convergence of sparse graph sequences using $\delta_{\square}^s$. It is known that the generalized graphon process $(G_t)$ generated from $\calW$ converges to $\calW$ almost surely in stretched cut distance \cite[Theorem 28]{BorChaCoh:J18}. In addition, for any graphon $\calW = (W,\Real_+)$ on $\Real_+$, if we define $\calW^{s'} := (W(\norm{W}_1^{\ofrac{2}}x_1,\norm{W}_1^{\ofrac{2}}x_2), \scS)$, then $\delta_{\square}(\calW^{s'},\calW^s)=0$ \cite{BorChaCoh:J18}. Therefore in this case we can identify $\calW^s$ with $\calW^{s'}$. Specifically, if we consider a canonical graphon $\calW^{G}$ induced by a graph $G$, then we can view $\calW^{G,s}$ as a step function on $\Real_+\times \Real_+$, in which the width and length of each square step is $\ofrac{\sqrt{2\abs{E(G)}}}$. Recall that the width and length of each square step in $\calW^G$ is $\ofrac{\abs{V(G)}}$. In the rest of the paper we will always view the stretched canonical graphon in this way. In \cref{exa:sparse_conv} we provide an example of a sparse graph sequence converging to a generalized graphon in stretched cut distance.
	
	\begin{Example}\label{exa:sparse_conv}
		Consider a sequence of graph $(G_n)$ such that $|V_n|=n$. Let $\alpha\in(0,1)$ be a constant. We choose $\floor{n^{\frac{1+\alpha}{2}}}$ vertices to form a complete subgraph, and the rest $n-\floor{n^{\frac{1+\alpha}{2}}}$ vertices are set as isolated. In this case, $\abs{E_n} = \frac{\floor{n^{\frac{1+\alpha}{2}}}(\floor{n^{\frac{1+\alpha}{2}}}-1)}{2}$. Therefore, we can label the vertices such that $\calW^{G_n,s}\equiv1$ on the region $\left[0,\frac{\floor{n^{\frac{1+\alpha}{2}}}}{\sqrt{\floor{n^{\frac{1+\alpha}{2}}}(\floor{n^{\frac{1+\alpha}{2}}}-1)}}\right)^2$, and equals $0$ elsewhere. It can be shown that $\lim\limits_{n\to\infty}\delta_{\square}(\calW^{G_n,s},I_{[0,1]^2})=0$, i.e., $(G_n)$ converges to $I_{[0,1]^2}$ in stretched cut distance. On the other hand, $(G_n)$ is a sparse graph sequence, hence will converge to a zero graphon in cut distance.
	\end{Example}
	
	\section{Convergence of Graph Frequencies}\label{sect:conv_result}
	
	In this section, we prove the convergence of graph frequencies (i.e., the eigenvalues of graph adjacency matrices) of a generalized graphon process $(G_t)$ generated from a generalized graphon $\calW$.
	
	As in the graphon literature, we consider the integral operator $\bT_\calW$ with integral kernel $W$:
	\begin{align*}
		\bT_\calW : L^2(S) &\to L^2(S) \\
		g &\mapsto \int_{S} W(x,x')g(x') \ud\mu(x').
	\end{align*}
	Since $W\in L^2(S\times S)$, the operator $\bT_\calW$ is a self-adjoint Hilbert-Schmidt operator. We denote the eigenvalues and orthonormal eigenvectors of $\bT_\calW$ as $\set{\lambda_j(\calW)\given j\in\bbZ\backslash\set{0}}$ and $\set{\varphi_j(x;\calW)\given j\in\bbZ\backslash\set{0}}$. The eigenvalues are ordered such that $\lambda_1(\calW)\geq\lambda_2(\calW)\geq\dots0$, and $\lambda_{-1}(\calW)\leq\lambda_{-2}(\calW)\leq\dots0$. Then it can be shown by \cite[Theorem 4.2.16]{Dav:07} that, $W$ can be decomposed as 
	\begin{align}\label{eq:graphex_decomp_funcs}
		W(x,x') = \sum_{j\in\bbZ\backslash\set{0}} \lambda_j(\calW)\varphi_j(x;\calW)\varphi_j(x';\calW).
	\end{align}
	%
	
	
	Provided the convergence of ${W^{G_t,s}}$, we can prove the convergence of the graph frequencies of $(G_t)$. To be specific, we will prove the convergence of the eigenvalues of ${W^{G_t,s}}$ to those of $W$ up to a scaling factor. 
	
	\begin{Theorem}\label{thm:g_freq_conv}
		Define
		\begin{align*}
			D_W(x) := \int_{S} W(x,x') \ud \mu(x') ,
		\end{align*}
		and assume that $D_W(x)\in L^p, \forall p \geq 1$. Then the graph frequencies of $(G_t)$ converges in the following way:
		\begin{align}\label{eq:g_freq_conv}
			\lim_{t\rightarrow\infty} \frac{\lambda_j(G_t)}{\sqrt{2|E(G_t)|}} = \frac{\lambda_j(W)}{\sqrt{\norm{W}_1}},
		\end{align} 
		where $\lambda_{-1}(G_t)\leq\lambda_{-2}(G_t)\leq\dots\leq0\leq\dots\leq\lambda_{2}(G_t)\leq\lambda_{1}(G_t)$ are eigenvalues of $G_t$'s adjacency matrix.
	\end{Theorem}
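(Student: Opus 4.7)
The plan is to translate the almost-sure convergence in stretched cut distance $\delta_\square^s(\calW^{G_t},\calW)\to 0$ (guaranteed by \cite{BorChaCoh:J18} and already invoked in the preceding discussion) into convergence of the spectra of the associated integral operators, and then to identify these spectra with $\lambda_j(G_t)/\sqrt{2|E(G_t)|}$ on one side and with $\lambda_j(\calW)/\sqrt{\norm{W}_1}$ on the other. First I would establish a discrete-to-continuous dictionary: as noted in the paper, $W^{G_t,s}$ is a step function on $\Real_+^2$ whose cells have side length $h_t := 1/\sqrt{2|E(G_t)|}$ and whose value on cell $(i,j)$ equals the corresponding entry of the adjacency matrix $\bA_{G_t}$. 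The span of the $N_t$ row-cell-indicator functions is invariant under $\bT_{\calW^{G_t,s}}$ and its orthogonal complement lies in the kernel of this operator; a direct computation shows that on this invariant subspace $\bT_{\calW^{G_t,s}}$ acts as multiplication by $h_t\bA_{G_t}$. Consequently, the nonzero eigenvalues of $\bT_{\calW^{G_t,s}}$ are precisely $\lambda_j(G_t)/\sqrt{2|E(G_t)|}$.

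For the limiting side, I would use the definition $\hat\mu=\norm{W}_1^{-1/2}\mu$ to obtain, for every $g\in L^2(S,\hat\mu)$,
\[
\bT_{\calW^s}g(x)=\int_S W(x,y)g(y)\,\mathrm{d}\hat\mu(y) = \norm{W}_1^{-1/2}\,\bT_\calW g(x),
\]
so that $\bT_{\calW^s}$ shares the eigenfunctions of $\bT_\calW$ with eigenvalues $\lambda_j(\calW)/\sqrt{\norm{W}_1}$. The theorem then reduces to the claim that $\delta_\square^s(\calW^{G_t},\calW)\to 0$ a.s.\ implies the pointwise convergence $\lambda_j(\calW^{G_t,s})\to \lambda_j(\calW^s)$ for every $j\in\bbZ\setminus\{0\}$.

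The central step is this last implication. In the classical dense-graphon setting on a probability space, the cut norm controls the $L^\infty\to L^1$ (and, combined with a uniform bound on the kernel, the $L^2$) operator norm of the associated integral operator, and Weyl/Hoffman--Wielandt-type perturbation inequalities deliver eigenvalue continuity; this is the tool I would ultimately invoke after reducing to a bounded-measure regime. The complication is that $\scS$ may be $\sigma$-finite with infinite total mass, so I would proceed by truncation: using the hypothesis $D_W\in L^p$ for all $p\geq 1$, for any $\varepsilon>0$ I would choose a finite-measure set $A_\varepsilon\subset S$ such that the Hilbert--Schmidt norm of $W\cdot\bigl(1-I_{A_\varepsilon\times A_\varepsilon}\bigr)$ is at most $\varepsilon$. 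On $A_\varepsilon$ the classical cut-norm eigenvalue continuity applies, and the tail contribution is controlled via Hoffman--Wielandt together with this Hilbert--Schmidt estimate; letting $\varepsilon\to 0$ after $t\to\infty$ yields the claim. The main obstacle is exactly this truncation: the couplings realizing the infimum in $\delta_\square^s$ need not restrict cleanly to $A_\varepsilon$, so care is required to align the truncation with the optimizing couplings and to justify interchanging the limits in $t$ and $\varepsilon$; the $L^p$-integrability of the degree function $D_W$ is what supplies the uniform integrability needed for this interchange and thereby unlocks the whole argument.
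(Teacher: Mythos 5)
Your two bookkeeping reductions are correct and are also used (in slightly different notation) by the paper: the nonzero spectrum of $\bT_{\calW^{G_t,s}}$ is $\set{\lambda_j(G_t)/\sqrt{2\abs{E(G_t)}}}$, and $\bT_{\calW^s}=\norm{W}_1^{-1/2}\bT_{\calW}$. The problem is the central step, which you yourself isolate as ``this last implication'' but never actually prove: that $\delta_\square^s(\calW^{G_t},\calW)\to0$ forces $\lambda_j(\calW^{G_t,s})\to\lambda_j(\calW^s)$. Even in the dense probability-space setting the cut norm does not directly control the $L^2\to L^2$ operator norm; the standard route is $\norm{\bT_{W_1-W_2}}_{2\to2}^4\le h(C_4,W_1-W_2)$ followed by a counting lemma, and the counting lemma is exactly what breaks on a $\sigma$-finite space of infinite mass. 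Bounding the $C_4$ integral by the cut norm there requires test sets of bounded measure, i.e.\ uniform $L^\infty$ control of the degree functions, which neither the hypothesis $D_W\in L^p$ for finite $p$ nor the empirical step functions $W^{G_t,s}$ supply (their maximum degree divided by $\sqrt{2\abs{E_t}}$ need not be uniformly bounded). Your proposed repair --- truncate to a finite-measure set $A_\varepsilon$ and control the tail by Hoffman--Wielandt --- needs a Hilbert--Schmidt bound on the tail of the \emph{random} objects $W^{G_t,s}$ that is uniform in $t$; smallness of the tail of $W$ itself gives no such bound without a concentration statement for the graphon process. You flag this difficulty (``care is required to align the truncation with the optimizing couplings'') but leave it unresolved, and it is precisely the load-bearing part of the argument. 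This is a genuine gap, not a presentational one: in the sparse theory, unlike the dense theory, homomorphism-density and spectral functionals are not automatically continuous in the (stretched) cut metric, which is why the theorem carries an integrability hypothesis at all.

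For comparison, the paper sidesteps cut-distance convergence entirely for the spectral claim. It invokes \cite[Proposition 30 (ii)]{BorChaCoh:J18} --- a probabilistic law of large numbers for homomorphism densities of the graphon process, which is where the hypothesis $D_W\in L^p$ for all $p\ge1$ is actually consumed --- to get $h(C_k,G_t)\to h(C_k,\calW)$ for every $k\ge3$, identifies $h(C_k,\cdot)$ with the $k$-th power sum of the normalized eigenvalues via \cref{eq:graphex_decomp_funcs}, and then recovers convergence of each individual eigenvalue from convergence of all power sums by a boundedness estimate extracted from the $C_4$ density, a diagonal subsequence extraction, and a power-sum rigidity (multiplicity-matching) argument. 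If you wish to complete your route you would essentially have to re-prove a quantitative form of that proposition; as written, your proof does not go through.
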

	
	\begin{proof}
		Given two simple graphs $F$ and $G$, we say a map $\phi:V(F)\to V(G)$ is an adjacency preserving map if $(v_i,v_j)\in E(F)$ implies $(\phi(v_i), \phi(v_j))\in E(G)$. Let $\hom(F,G)$ be the number of adjacency preserving maps between $F$ and $G$. Define
		\begin{align*}
			h(F,G) = \frac{\hom(F,G)}{(2\abs{E(G)})^{\frac{\abs{V(F)}}{2}}}.
		\end{align*}
		Generally, For a graphon $\calW$, we define 
		\begin{align*}
			h(F,\calW) = \norm{W}_1^{-\frac{\abs{V(F)}}{2}} \int_{S^{\abs{V(F)}}} \prod_{(v_i,v_j)\in E(F)} W(x_i,x_j) \ud x_1\dots\ud x_{\abs{V(F)}}.
		\end{align*}
		It can be shown that, for a canonical graphon $\calW^{G}$, we have $h(F,G) = h(F,\calW^{G})$. Besides, according to \cite[Proposition 30 (ii)]{BorChaCoh:J18}, we have $\lim\limits_{t\to\infty} h(F,G_t) = h(F,\calW)$. Therefore,
		\begin{align}\label{eq:density_conv}
			\lim_{t\to\infty} h(F,\calW^{G_t}) = h(F,\calW).
		\end{align}
		Let $F = C_k$ be a $k$-cycle, $k\geq3$. Then we have
		\begin{align}
			\begin{aligned}\label{eq:cycle-eig}
				h(C_k, \calW) &= \norm{W}_1^{-\frac{k}{2}} \int_{S^k} \prod_{(v_i,v_j)\in E(F)} W(x_i,x_j) \ud x_1\dots\ud x_{k}\\
				&= \norm{W}_1^{-\frac{k}{2}} \sum_{j\in\bbZ\backslash\set{0}} \lambda_j(\calW)^k,
			\end{aligned}
		\end{align}
		where the second equality can be obtained by replacing the integrand by \cref{eq:graphex_decomp_funcs} and using the orthogonality of $\calW$'s eigenvectors. Combining \cref{eq:density_conv} and \cref{eq:cycle-eig}, we have
		\begin{align}\label{eq:sum_on_conv}
			\lim_{t\to\infty} \sum_{j\in\bbZ\backslash\set{0}}\parens*{\frac{\lambda_j(\calW^{G_t})}{\sqrt{\norm{W^{G_t}}_1}}}^k = \sum_{j\in\bbZ\backslash\set{0}}\parens*{\frac{\lambda_j(\calW)}{\sqrt{\norm{W}_1}}}^k, \forall k\geq 3.
		\end{align}
		Note that 
		\begin{align}
			\lambda_j(\calW^{G_t}) = \frac{\lambda_j(G_t)}{\abs{V(G_t)}},
			\norm{W^{G_t}}_1 = 2\frac{\abs{E(G_t)}}{\abs{V(G_t)}^2},
		\end{align}
		hence 
		\begin{align}\label{eq:scaled_on_to_matrix}
			\frac{\lambda_j(\calW^{G_t})}{\sqrt{\norm{W^{G_t}}_1}} = \frac{\lambda_j(G_t)}{\sqrt{2\abs{E(G_t)}}}.
		\end{align}
		We next prove \cref{eq:g_freq_conv} from \cref{eq:sum_on_conv} by contradiction. In the rest of this proof, we assume there exists $k_0\in\bbZ\backslash\set{0}$ such that $\set{\dfrac{\lambda_{k_0}(\calW^{G_t})}{\sqrt{\norm{W^{G_t}}_1}}\given t\in\Real_+}$ does not converge to $\dfrac{\lambda_{k_0}(\calW)}{\sqrt{\norm{W}_1}}$ when $t\to\infty$.
		
		We first observe that $\set{\dfrac{\lambda_{j}(\calW^{G_t})}{\sqrt{\norm{W^{G_t}}_1}}\given t\in\Real_+}$ is a bounded set for all $j\in\bbZ\backslash\set{0}$. The argument goes as follows: let $k=4$. For any graphon $\calW'$ We have
		\begin{align*}
			\sum_{j=1}^m \parens*{\frac{\lambda_j(\calW')}{\sqrt{\norm{W'}_1}}}^4 \leq \sum_{j\in\bbZ\backslash\set{0}}\parens*{\frac{\lambda_j(\calW')}{\sqrt{\norm{W'}_1}}}^4 = h(C_4,\calW').
		\end{align*}
		Note that $\set{\lambda_j(\calW')\given j=1,2,\dots}$ is a non-increasing sequence. Therefore,
		\begin{align*}
			\frac{\lambda_m(\calW')}{\sqrt{\norm{W'}_1}} \leq \parens*{\frac{h(C_4,\calW')}{m}}^{\ofrac{4}}.
		\end{align*}
		Similarly, we have
		\begin{align*}
			\frac{\lambda_{-m}(\calW')}{\sqrt{\norm{W'}_1}} \geq -\parens*{\frac{h(C_4,\calW')}{m}}^{\ofrac{4}}.
		\end{align*}
		Note that $\set{h(C_4,\calW^{G_t})}$ is a convergent sequence when $t\to\infty$, hence bounded, so there exists a $B>0$ such that
		\begin{align*}
			\frac{\abs{\lambda_{j}(\calW^{G_t})}}{\sqrt{\norm{W^{G_t}}_1}} \leq \parens*{\frac{B}{\abs{j}}}^{\ofrac{4}}, \forall j \in\bbZ\backslash\set{0},
		\end{align*}
		i.e., the set $\set{\dfrac{\lambda_{j}(\calW^{G_t})}{\sqrt{\norm{W^{G_t}}_1}}\given t\in\Real_+}$ is bounded for all $j\in\bbZ\backslash\set{0}$.
		
		According to our assumption, there exists a sequence $(t_n)\to\infty$ such that $\parens*{\dfrac{\lambda_{k_0}(\calW^{G_{t_n}})}{\sqrt{\norm{W^{G_{t_n}}}_1}}}$ does not converge to $\dfrac{\lambda_{k_0}(\calW)}{\sqrt{\norm{W}_1}}$ when $n\to\infty$. For simplicity, we denote the double sequence $\parens*{\dfrac{\lambda_{j}(\calW^{G_{t_n}})}{\sqrt{\norm{W^{G_{t_n}}}_1}}}$ as $(a_{j,n})$ and write $\dfrac{\lambda_{k_0}(\calW)}{\sqrt{\norm{W}_1}}$ as $b_j$. Note that since $(a_{k_0, n})$ is bounded, we can assume without loss of generality that $\lim\limits_{n\to\infty}a_{k_0, n}$ exists and does not equal to $b_{k_0}$. We next construct a subsequence $(n_k)$ such that $\lim\limits_{k\to\infty} a_{j,n_k}$ exists for every $j\in\bbZ\backslash\set{0}$ as follows:
		\begin{enumerate}
			\item step 1: find an increasing sequence $(r_{1,i})_{i=1}^{\infty}\subset\bbN$ such that $\lim\limits_{i\to\infty} a_{1,r_{1,i}}$ exists.
			\item step 2: suppose we have constructed a sequence $(n_i)$ such that $\lim\limits_{i\to\infty} a_{j,n_i}$ exists for all $1\leq j<s$. Then we find a sequence $(r_{s,i})_{i=1}^{\infty}\subset(n_i)$ such that $\lim\limits_{i\to\infty} a_{s,r_{s,i}}$ exists.
			\item step 3: by construction of step 1 and 2, we have obtained a double sequence $(r_{j,i})$ such that $\lim\limits_{i\to\infty} a_{j,r_{j,i}}$ exists for all $j\geq 1$, and $(r_{s,i})\subset(r_{s-1,i})$. Therefore, if we consider the sequence $(r_{i,i})$, we will have $\lim\limits_{i\to\infty} a_{j,r_{i,i}}$ exists for all $j\geq 1$.
			\item step 4: by repeating the above procedure, we can find a subsequence of $(r_{i,i})$, denoted as $(n_k)$, such that $\lim\limits_{i\to\infty} a_{j,n_k}$ exists for all $j\leq -1$. This completes the construction.
		\end{enumerate}
		For simplicity, we write $(a_{j, n_k})$ as $(a_{j,n})$, and denote $\lim\limits_{n\to\infty}a_{j,n}$ as $a_j$. According to \cref{eq:sum_on_conv}, we have
		\begin{align}\label{eq:sum_on_conv_seq}
			\lim_{n\to\infty}\sum_{j\in\bbZ\backslash\set{0}}a_{j,n}^k = \sum_{j\in\bbZ\backslash\set{0}}b_{j}^k, \forall k\geq 3.
		\end{align}
		Note that if $k>4$, then the infinite sum $\sum\limits_{j\in\bbZ\backslash\set{0}} \parens*{\dfrac{B}{\abs{j}}}^{\frac{k}{4}}$ converges. This implies that the sum in the \gls{LHS} of \cref{eq:sum_on_conv_seq} converges absolutely, so we can switch the summation with the limit there:
		\begin{align}\label{eq:seq_sum_equal}
			\sum_{j\in\bbZ\backslash\set{0}}a_{j}^k = \sum_{j\in\bbZ\backslash\set{0}}b_{j}^k, \forall k > 4.
		\end{align}
		We are going to prove that $a_{j} = b_j$ for all $j \in\bbZ\backslash\set{0}$ from \cref{eq:seq_sum_equal}. To achieve this, we rearrange the sequences as $(a_{j_l})$ and $(b_{j_l})$ such that $(\abs{a_{j,l}})$ and $(\abs{b_{jl}})$ are non-increasing. Then \cref{eq:seq_sum_equal} can be rewritten as
		\begin{align}\label{eq:seq_rearrange_sum}
			\sum_{l=1}^\infty a_{j_l}^k = \sum_{l=1}^\infty b_{j_l}^k, \forall k > 4.
		\end{align}
		Then it suffices to prove $a_{j_l} = b_{j_l}$. We prove this by induction on $l$.  Suppose we have proved $a_{j_l} = b_{j_l}$ for $l<m$. Then we have 
		\begin{align}\label{eq:abs_sum_induction}
			\sum_{l=m}^\infty \abs{a_{j_l}}^k = \sum_{l=m}^\infty \abs{b_{j_l}}^k,
		\end{align}
		where $k$ is even and $k>4$. We first prove $\abs{a_{j_m}} = \abs{b_{j_m}}$. If $\abs{a_{j_m}} > \abs{b_{j_m}}$, then dividing both sides of \cref{eq:abs_sum_induction} by $\abs{b_{j_m}}$, and let $k\to\infty$ through even numbers, the \gls{LHS} is infinity, and the \gls{RHS} is a finite number, leading to contradiction, hence $\abs{a_{j_m}} \leq \abs{b_{j_m}}$. Similarly it can be shown that $\abs{b_{j_m}} \leq \abs{a_{j_m}}$, thus $\abs{a_{j_m}} = \abs{b_{j_m}}$.
		
		Suppose $b_{j_m}$ appears $p$ times in $(b_j)$ and $q$ times in $(a_j)$; $-b_{j_m}$ appears $p'$ times in $(b_j)$ and $q'$ times in $(a_j)$. Then \cref{eq:seq_sum_equal} can be rewritten as
		\begin{align*}
			(q + (-1)^kq')b_{j_m}^k + \sum_{l>m} a_{j_l}^k = (p + (-1)^kp')b_{j_m}^k + \sum_{l>m} b_{j_l}^k.
		\end{align*}
		Divide both sides by $b_{j_m}^k$ and let $k\to\infty$ through odd numbers we have $q-q'=p-p'$. Similarly by letting $k\to\infty$ through even numbers we have $q+q'=p+p'$. Thus $p=q$ and $p'=q'$, which indicates that $a_{j_m}=b_{j_m}$, which concludes the induction. Therefore, $a_j=b_j$ for all $j \in\bbZ\backslash\set{0}$, which contradicts the assumption that $a_{k_0}\neq b_{k_0}$.
	\end{proof}
	
	\cref{thm:g_freq_conv} implies that the graph frequencies of generalized graphon process asymptotically scale linearly with the square root of number of edges. For a graph sequence converging to a standard graphon, the graph frequencies asymptotically scale linearly with the number of nodes \cite[Lemma 4]{RuiChaRib:J21}. In \cref{sect:exp} we will compare the validity of these hypothesis on a real dataset.
	
	\section{Numerical Experiment}\label{sect:exp}
	
	In this section, we corroborate our results on the ogbn-arxiv dataset\footnote{\url{https://ogb.stanford.edu/docs/nodeprop/}}, where every node represents a paper, and every directed edge represents one paper citing another. We make all edges undirected in this experiment. The entire graph is denoted as $G_{\mathrm{all}} = (V_{\mathrm{all}}, E_{\mathrm{all}})$. We generate a growing graph sequence $(G_n)\subset G$ as follows: 
	\begin{enumerate}
		\item [(1)] we start with an empty graph $\tilde{G}_0$ with no nodes or edges.
		\item [(2)] given $\tilde{G}_n = (\tilde{V}_n,\tilde{E}_n)$, we randomly select $200$ nodes from $V_{\mathrm{all}}\backslash\tilde{V}_n$ without replacement, and add them into $\tilde{V}_n$ to obtain $\tilde{V}_{n+1}$. By letting $\tilde{E}_{n+1} = E_{\mathrm{all}}\bigcap(\tilde{V}_{n+1}\times\tilde{V}_{n+1})$ we obtain $\tilde{G}_{n+1}= (\tilde{V}_{n+1},\tilde{E}_{n+1})$. We iterate this step for $90$ times to get $(\tilde{G}_n)_{n=1}^{90}$.
		\item [(3)] By omitting all isolated vertices in every $\tilde{G}_n$, we obtain the sequence $(G_n)_{n=1}^{90}$.
	\end{enumerate} 
	
	We model the sequence $(G_n)$ as a subsequence of a generalized graphon process $(G_t)$ generated from a graphon, i.e., $(G_n) = (G_{t_n})$ with $t_n\to\infty$. Then according to \cref{thm:g_freq_conv}, $\set{\lambda_j(G_n)\given n=1,2,\dots}$ should have a linear relationship with $\sqrt{\abs{E_n}}$ as $n\to\infty$ for all $j\in\bbZ\backslash\set{0}$. On the other hand, if $(G_n)$ converge to a standard graphon in cut distance, then according to \cite[Lemma 4]{RuiChaRib:J21}, $\set{\lambda_j(G_n)\given n=1,2,\dots}$ should have a linear relationship with $\abs{V_n}$ as $n\to\infty$ for all $j\in\bbZ\backslash\set{0}$. Finally, if $(G_n)$ has bounded degree as assumed by \cite{RodGamBar:J22}, then it can be shown that the set $\set{\lambda_j(G_n)\given n=1,2,\dots,j\in\bbZ\backslash\set{0}}$ is bounded. In order to verify which of these models fits the data best, we fit linear model (with zero interception) for pairs $\set{(\sqrt{\abs{E_n}}, \lambda_j(G_n))}$ and $\set{(\abs{V_n}, \lambda_j(G_n))}$ and test their fitness by \gls{MSE}.
	
	From \cref{tab:MSE} and \cref{fig:fitness_illus} we see that the linear model for $\set{(\sqrt{\abs{E_n}}, \lambda_j(G_n))}$ has better fitness than that for $\set{(\abs{V_n}, \lambda_j(G_n))}$. Due to the sparsity of the graph sequence (see \cref{fig:n_e_ratio}), the standard graphon can be inappropriate as a meaningful limit object. In addition, it appears that the magnitudes of eigenvalues keep increasing instead of clearly bounded by some constant. Therefore, the generalized graphon process model has the best fitness among all models on this dataset.
	
	\begin{table}[!htbp]
		\centering
		\caption{MSE of linear fitness, each data point is obtained by averaging on 20 realizations of the sequence $\set{G_n}$.}
		\label{tab:MSE}
		\begin{tabular}{ |p{4cm}|p{1cm}|p{1cm}|p{1cm}|p{1cm}|p{1cm}|  }
			\hline
			\multicolumn{6}{|c|}{MSE of linear fit} \\
			\hline
			& $\lambda_1$ & $\lambda_2$ & $\lambda_3$ & $\lambda_4$ & $\lambda_5$  \\
			\hline
			Generalized graphon process& 5.98 & 1.74 & 0.91 & 0.54 & 0.43 \\
			Standard graphon &  19.33& 6.83 & 4.49& 3.57& 3.05 \\
			\hline
			& $\lambda_{-1}$ & $\lambda_{-2}$ & $\lambda_{-3}$ & $\lambda_{-4}$ & $\lambda_{-5}$  \\
			\hline
			Generalized graphon process & 6.21 & 1.94 & 1.01 & 0.60& 0.44 \\
			Standard graphon & 18.86 & 6.45 & 4.08& 3.29& 2.68 \\
			\hline
		\end{tabular}
	\end{table}
	
	\begin{figure}[!htbp]
		\centering
		\begin{subfigure}[b]{0.43\columnwidth}
			\centering
			\includegraphics[width=\linewidth, trim=.5cm 0cm 1cm .5cm, clip]{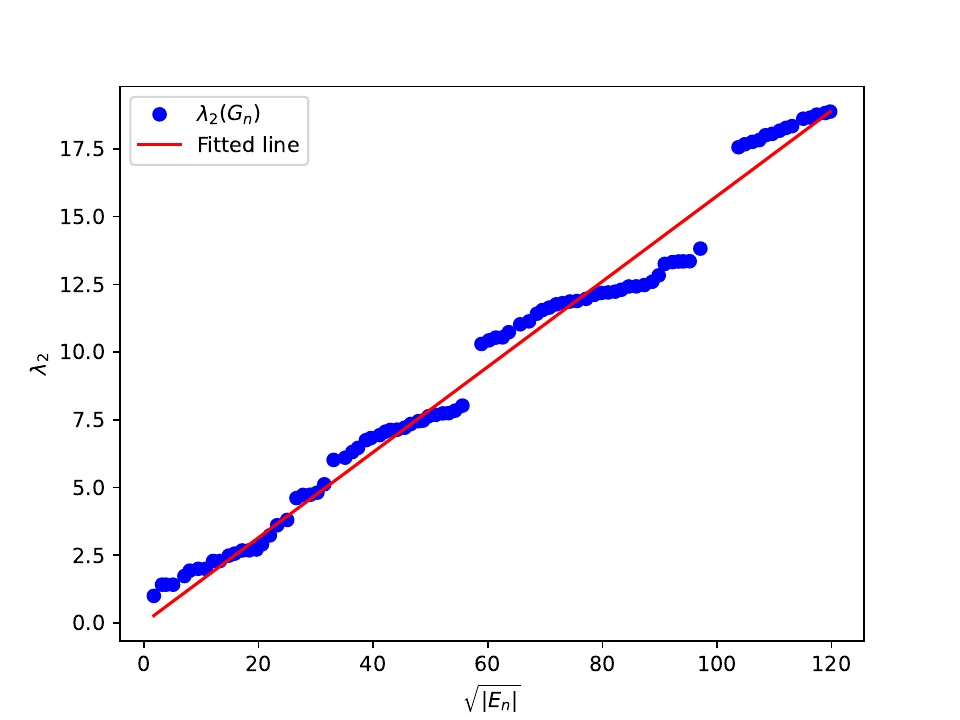}
			\caption{}
			\label{fig:lam_2_edge}
		\end{subfigure}
		\begin{subfigure}[b]{0.43\columnwidth}
			\centering
			\includegraphics[width=\linewidth, trim=.5cm 0cm 1cm .5cm, clip]{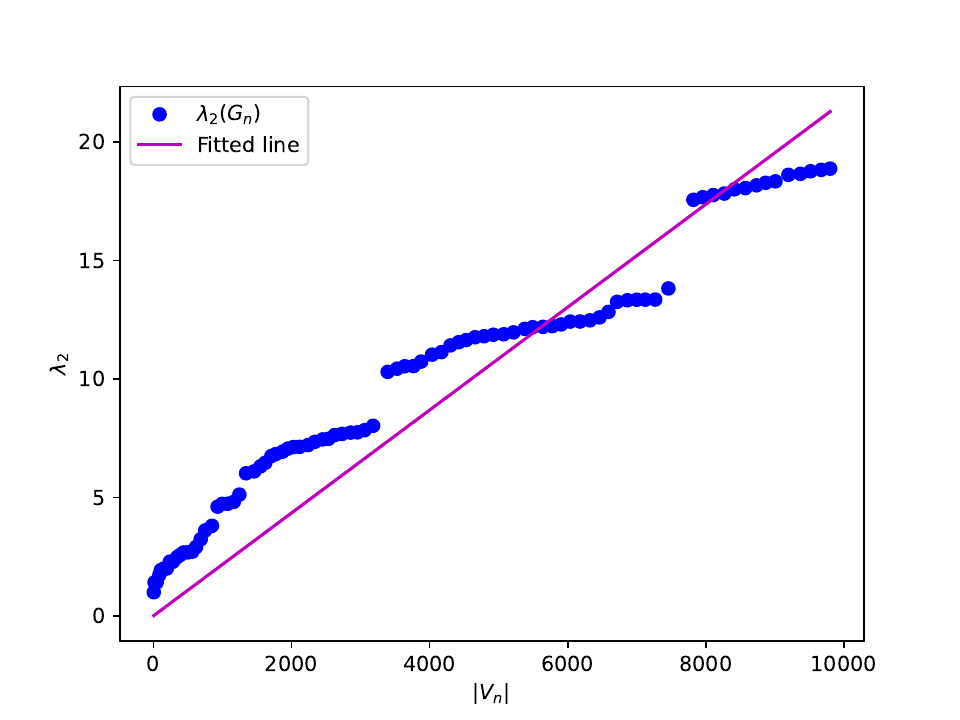}
			\caption{}
			\label{fig:lam_2_node}
		\end{subfigure}
		\begin{subfigure}[b]{0.43\columnwidth}
			\centering
			\includegraphics[width=\linewidth, trim=.5cm 0cm 1cm .5cm, clip]{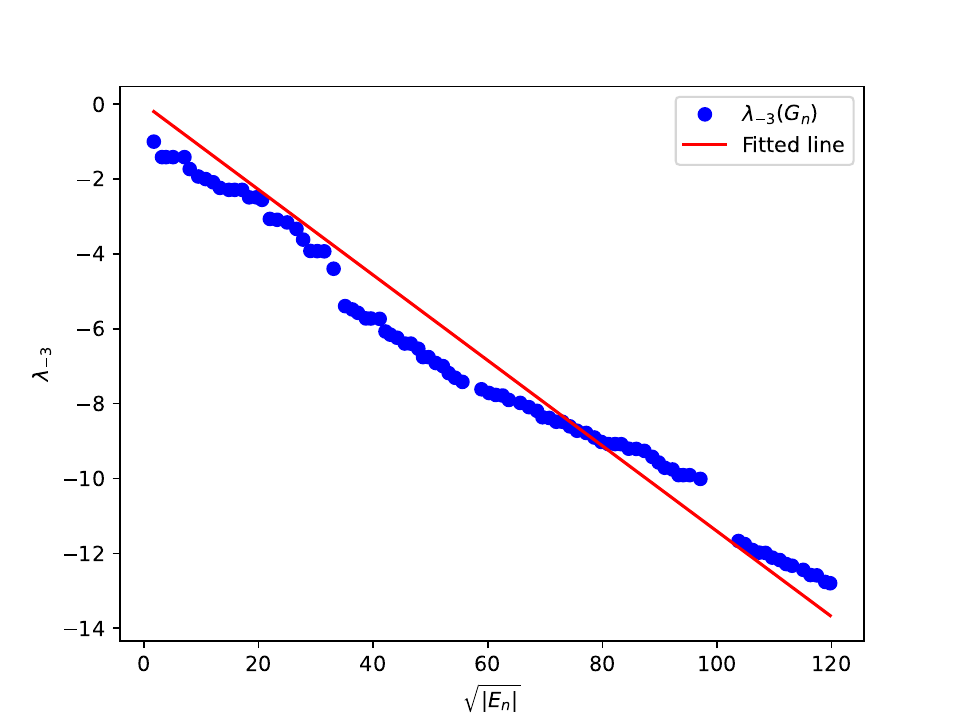}
			\caption{}
			\label{fig:lam-3_node}
		\end{subfigure}
		\begin{subfigure}[b]{0.43\columnwidth}
			\centering
			\includegraphics[width=\linewidth, trim=.5cm 0cm 1cm .5cm, clip]{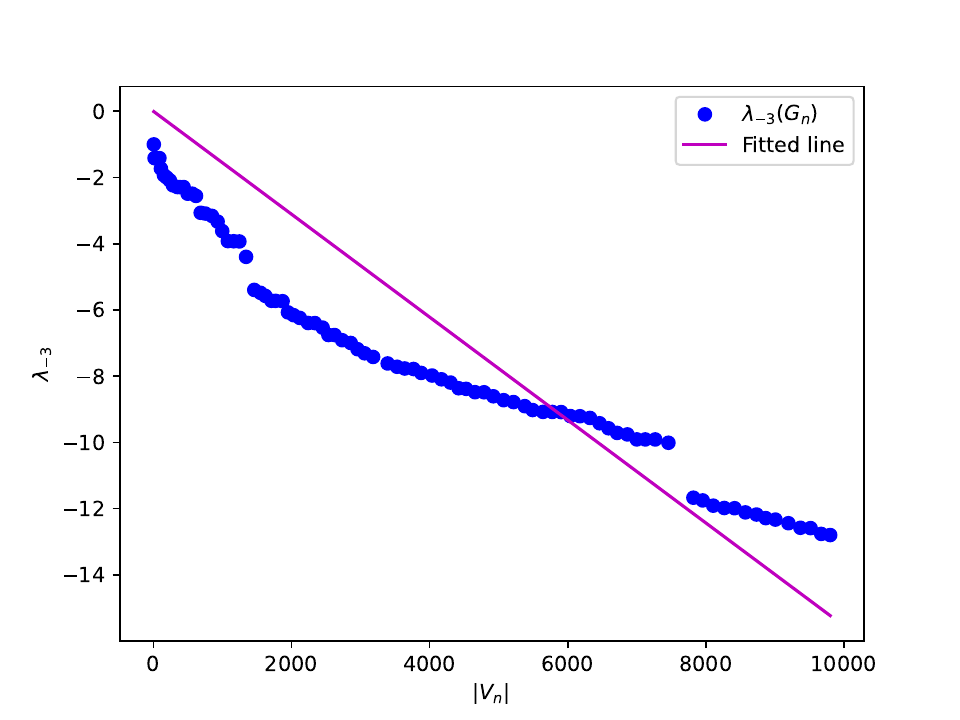}
			\caption{}
			\label{fig:lam-3_edge}
		\end{subfigure}
		\caption{Fitness under generalized graphon process and standard graphon models. }
		\label{fig:fitness_illus}
	\end{figure}
	
	\begin{figure}[!htbp]
		\centering
		\includegraphics[width=0.6\linewidth, trim=.5cm 0cm 1cm .5cm, clip]{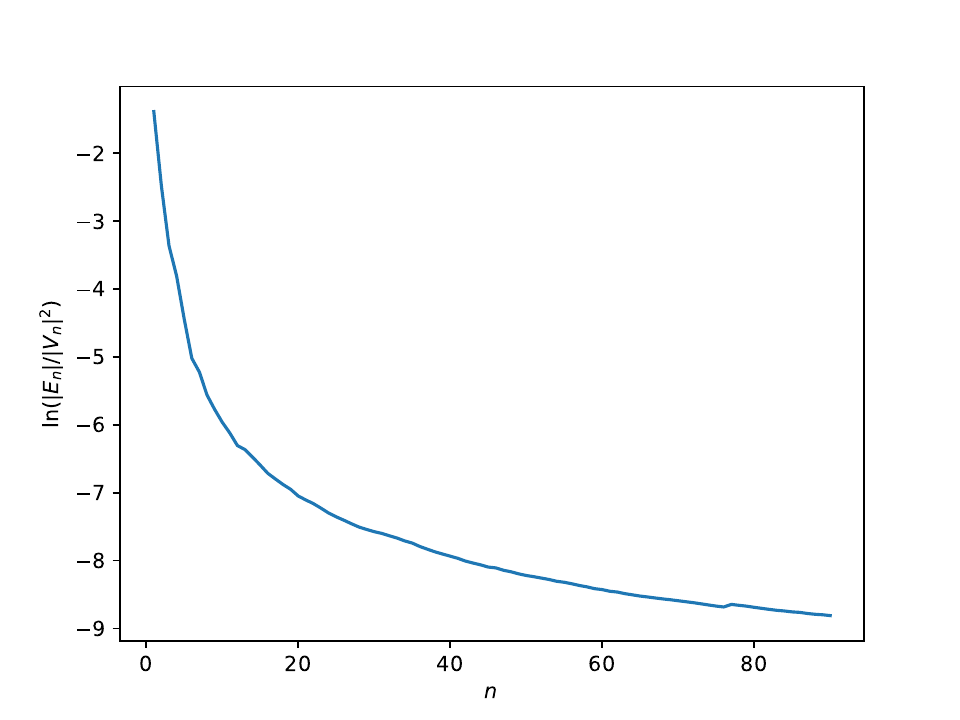}
		\caption{The sparsity of $(G_n)$.}
		\label{fig:n_e_ratio}
	\end{figure}
	
	\section{Conclusion}\label{sect:conc}
	
	In this paper, we have introduced the notions of generalized graphon, stretched cut distance, and generalized graphon process to describe the convergence of sparse graph sequence. To be specific, we studied generalized graphon process that is known to converge to the generalized graphon in stretched cut distance, and proved the convergence of the associated adjacency matrices' eigenvalues, which are known as graph frequencies in \gls{GSP}. This work lays the foundation for transfer learning on sparse graphs. Possible future work includes proving the convergence of \gls{GFT} and filters for generalized graphon process.  
	
	
	\bibliographystyle{IEEEtran}
	\bibliography{IEEEabrv,StringDefinitions,refs}

\begin{thebibliography}{10}
\providecommand{\url}[1]{#1}
\csname url@samestyle\endcsname
\providecommand{\newblock}{\relax}
\providecommand{\bibinfo}[2]{#2}
\providecommand{\BIBentrySTDinterwordspacing}{\spaceskip=0pt\relax}
\providecommand{\BIBentryALTinterwordstretchfactor}{4}
\providecommand{\BIBentryALTinterwordspacing}{\spaceskip=\fontdimen2\font plus
\BIBentryALTinterwordstretchfactor\fontdimen3\font minus
  \fontdimen4\font\relax}
\providecommand{\BIBforeignlanguage}[2]{{%
\expandafter\ifx\csname l@#1\endcsname\relax
\typeout{** WARNING: IEEEtran.bst: No hyphenation pattern has been}%
\typeout{** loaded for the language `#1'. Using the pattern for}%
\typeout{** the default language instead.}%
\else
\language=\csname l@#1\endcsname
\fi
#2}}
\providecommand{\BIBdecl}{\relax}
\BIBdecl

\bibitem{OrtFroKov:J18}
A.~Ortega, P.~Frossard, J.~Kovačević, J.~M.~F. Moura, and P.~Vandergheynst,
  ``Graph signal processing: Overview, challenges, and applications,''
  \emph{Proc. {IEEE}}, vol. 106, no.~5, pp. 808--828, Apr. 2018.

\bibitem{TanEldOrt:J20}
Y.~Tanaka, Y.~C. Eldar, A.~Ortega, and G.~Cheung, ``Sampling signals on graphs:
  From theory to applications,'' \emph{{IEEE} Signal Process. Mag.}, vol.~37,
  no.~6, pp. 14--30, Oct. 2020.

\bibitem{JiaJiTay:J23}
X.~Jian, F.~Ji, and W.~P. Tay, ``Generalizing graph signal processing: High
  dimensional spaces, models and structures,'' \emph{Foundations and Trends®
  in Signal Processing}, vol.~17, no.~3, pp. 209--290, Mar. 2023.

\bibitem{SegMarRib:J17}
S.~Segarra, A.~G. Marques, and A.~Ribeiro, ``Optimal graph-filter design and
  applications to distributed linear network operators,'' \emph{{IEEE} Trans.
  Signal Process.}, vol.~65, no.~15, pp. 4117--4131, May 2017.

\bibitem{CecBar:J20}
E.~Ceci and S.~Barbarossa, ``Graph signal processing in the presence of
  topology uncertainties,'' \emph{{IEEE} Trans. Signal Process.}, vol.~68, pp.
  1558--1573, Feb. 2020.

\bibitem{SongKangWang:C22}
Y.~Song, Q.~Kang, S.~Wang, K.~Zhao, and W.~P. Tay, ``On the robustness of graph
  neural diffusion to topology perturbations,'' in \emph{Advances in Neural
  Information Processing Systems}, vol.~35, New Orleans, LA, USA, 2022.

\bibitem{LevHuaBuc:J21}
R.~Levie, W.~Huang, L.~Bucci, M.~Bronstein, and G.~Kutyniok, ``Transferability
  of spectral graph convolutional neural networks,'' \emph{J. Machine Learning
  Research}, vol.~22, no.~1, p. 12462–12520, Jan. 2021.

\bibitem{RonIsuKut:C19}
R.~Levie, E.~Isufi, and G.~Kutyniok, ``On the transferability of spectral graph
  filters,'' in \emph{2019 13th International conference on Sampling Theory and
  Applications (SampTA)}, Bordeaux, France, 2019.

\bibitem{Lov:12}
L.~Lov{\'a}sz, \emph{Large Networks and Graph Limits}.\hskip 1em plus 0.5em
  minus 0.4em\relax Providence, RI, USA: American Mathematical Society, 2012.

\bibitem{RuiChaRib:J21}
L.~Ruiz, L.~F.~O. Chamon, and A.~Ribeiro, ``Graphon signal processing,''
  \emph{{IEEE} Trans. Signal Process.}, vol.~69, pp. 4961--4976, Aug. 2021.

\bibitem{RuiChaRib:C20}
L.~Ruiz, L.~Chamon, and A.~Ribeiro, ``Graphon neural networks and the
  transferability of graph neural networks,'' in \emph{Advances in Neural
  Information Processing Systems}, vol.~33, 2020.

\bibitem{RuiChaRib:C21}
L.~Ruiz, L.~F.~O. Chamon, and A.~Ribeiro, ``Transferable graph neural networks
  on large-scale stochastic graphs,'' in \emph{Proc. Asilomar Conf. on Signals,
  Systems and Computers}, Pacific Grove, CA, USA, 2021.

\bibitem{BorChaCoh:J18}
C.~Borgs, J.~T. Chayes, H.~Cohn, and N.~Holden, ``Sparse exchangeable graphs
  and their limits via graphon processes,'' \emph{J. Machine Learning
  Research}, vol.~18, no. 210, pp. 1 -- 71, May 2018.

\bibitem{BorChaCoh:J19}
C.~Borgs, J.~T. Chayes, H.~Cohn, and V.~Veitch, ``Sampling perspectives on
  sparse exchangeable graphs,'' \emph{The Annals of Probability}, vol.~47,
  no.~5, pp. 2754 -- 2800, 2019.

\bibitem{RuiCha:C20}
L.~Ruiz, L.~F.~O. Chamon, and A.~Ribeiro, ``The graphon {F}ourier transform,''
  in \emph{Proc. IEEE Int. Conf. Acoustics, Speech, and Signal Processing},
  Barcelona, Spain, 2020.

\bibitem{Dav:07}
E.~B. Davies, \emph{Linear operators and their spectra}.\hskip 1em plus 0.5em
  minus 0.4em\relax Cambridge University Press, 2007.

\bibitem{RodGamBar:J22}
T.~M. Roddenberry, F.~Gama, R.~G. Baraniuk, and S.~Segarra, ``On local
  distributions in graph signal processing,'' \emph{{IEEE} Trans. Signal
  Process.}, vol.~70, pp. 5564--5577, Nov. 2022.

\end{thebibliography}

\end{document}